\newtheorem{lemma}{Lemma}
\def\BibTeX{{\rm B\kern-.05em{\sc i\kern-.025em b}\kern-.08em
    T\kern-.1667em\lower.7ex\hbox{E}\kern-.125emX}}
\begin{document}

\title{RC-Gossip: Information Freshness in Clustered Networks with Rate-Changing Gossip\\
}
\author{Irtiza Hasan$\qquad$Ahmed Arafa\\Department of Electrical and Computer Engineering\\ University of North Carolina at Charlotte, NC 28223\\
\emph{ihasan@charlotte.edu}$\qquad$\emph{aarafa@charlotte.edu}
\thanks{This work was supported by the U.S. National Science Foundation under Grants CNS 21-14537 and ECCS 21-46099.}}

\maketitle

\begin{abstract}
A clustered gossip network is considered in which a source updates its information over time, and end-nodes, organized in clusters through clusterheads, are keeping track of it. The goal for the nodes is to remain as fresh as possible, i.e., have the same information as the source, which we assess by the long-term average binary freshness metric. We introduce a smart mechanism of information dissemination which we coin \emph{rate-changing gossip} (RC-Gossip). Its main idea is that gossiping is directed towards nodes that need it the most, and hence the rate of gossiping changes based on the number of fresh nodes in the network at a given time. While Stochastic Hybrid System (SHS) analysis has been the norm in studying freshness of gossip networks, we present an equivalent way to analyze freshness using a renewal-reward-based approach. Using that, we show that RC-gossip significantly increases freshness of nodes in different clustered networks, with optimal cluster sizes, compared to traditional gossiping techniques.
\end{abstract}

\section{Introduction}

Real-time status updating is becoming increasingly important with the rise of intelligent agile systems equipped with Artificial Intelligence (AI). For example, autonomous vehicles, which can be modeled as gossip networks, need to communicate efficiently and transmit time-sensitive information to other vehicles nearby and coordinate\cite{xie2024}. Timely dissemination of fresh information from the source vehicle to others is crucial for such  real-time status updating systems. Systems which can be modeled as gossip networks, in which information freshness is crucial for successful operation, include wireless sensor networks (WSN)\cite{kadota2018}, Unmanned Aerial Vehicles (UAVs) and Internet of Things (IoT) networks\cite{choi2021}. Motivated by the ubiquitous and emerging applications of gossip networks, we study how we can optimize their gossiping strategies to keep them in sync with the source of information so as to maximize freshness throughout the network nodes. 

Status updating systems are commonly analyzed using the Age-of-Information (AoI) metric \cite{kaul2012realtime}, which has been applied in a wide range of scenarios for studying timeliness \cite{yates2021age}. A newer variant of AoI for gossip networks, Version Age-of-Information (VAoI) \cite{yates2021gossip}, has been introduced to keep track of the version of the source information to quantify age. The key mathematical tool in analyzing VAoI is Stochastic Hybrid System (SHS) analysis. This motivated a series of studies using SHS to quantify VAoI, see, e.g., \cite{Kaswan2022, Mitra2022,Kaswan2023,Srivastava2023,Mitra2024,Srivastava2025, Irtiza2025}, and the survey in \cite{Kaswan2025}. The studies of particular interest to our work are the works in \cite{BFM, Bastopcu2022}. In \cite{BFM}, the authors use the Binary Freshness Metric (BFM) in gossip networks where instead of keeping track of the version of information, the metric keeps track of whether nodes are fresh or stale. This turns out to be an equivalently effective way to characterize freshness of networks. In \cite{Bastopcu2022}, the authors analyze information dissemination in a fully connected gossip network using the long-term average BFM without invoking the SHS tool. In all these studies, the set of gossiping neighbors is fixed. That is, the set of nodes which receive gossip updates and the rate of gossip updates is the same throughout the information dissemination process and depends on the total number of connected (neighboring) nodes, which in turn relies on the network topology. 

\begin{figure}[t]
    \centering
    \includegraphics[width=.8\columnwidth]{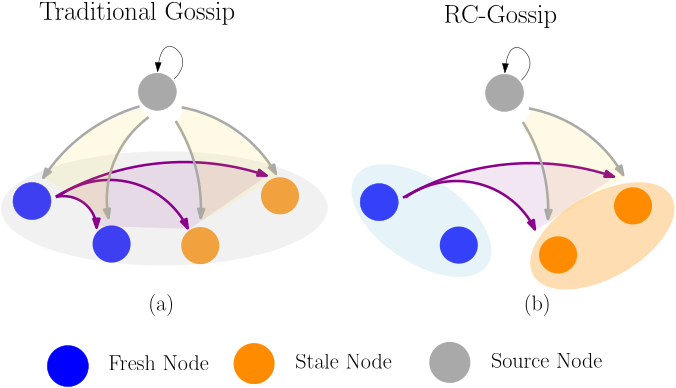}
    \caption{Compared to traditional gossip (left), RC-Gossip (right) directs gossiping towards stale nodes in the network, enhancing freshness by effectively changing gossiping rates.}
    \label{fig:RC Gossip}
    \vspace{-.2in}
\end{figure}
In this paper, we introduce the notion of {\it rate-changing gossip} (RC-Gossip), whose idea is based on directing gossiping towards stale nodes that need it the most, see Fig.~\ref{fig:RC Gossip}. Hence, the information dissemination (gossiping) rate becomes dependent on the number of fresh nodes in the network. Additionally, this makes gossiping {\it smart}, by saving unnecessary communications. 
On the other hand, it makes the analysis of such systems more challenging, since the topology is not fixed anymore; it changes according to which nodes are eligible to receive information, which further depends on their freshness status. Nevertheless, we overcome such difficulties by switching tools from the traditional SHS approach and, motivated by \cite{Bastopcu2022}, employ a renewal-reward (RR)-based approach to characterize the long-term average freshness in clustered networks under RC-Gossip. We first show that our RR approach provides equivalent results to those achieved by SHS approaches. Then, we use the RR tool to analyze the freshness improvement achieved under RC-Gossip. We do so for non-hierarchical networks first, and then use the results as building blocks to analyze hierarchical clustered networks. Finally, through numerical evaluations we show that there exists an optimal cluster size at which RC-Gossip significantly enhances freshness compared to traditional gossiping. 
\section{System Model} \label{sec:sys-mod}

We consider a system in which a source node updates itself according to a Poisson process of rate $\lambda_e$. The system consists of $n$ end-nodes that wish to follow the source's information. Nodes are divided into $m$ clusters of size $k$, i.e.,
\begin{align}
    m = \frac{n}{k},
\end{align}
with each cluster having a designated clusterhead (CH). Status updates from the source reach the end-nodes in two stages:
\begin{enumerate}
  \item \textbf{Source \(\to\) CHs:}
    The source sends updates to the \(m\) CHs according to a Poisson process of (total) rate \(\lambda_s\). 

  \item \textbf{CHs \(\to\) Nodes:}
    Each CH sends updates to its \(k\) nodes according to a Poisson process of (total) rate \(\lambda_c\). 
\end{enumerate}
Let $\nu_s(t)$ and $\nu_i(t)$ denote the version of the source information and node $i$ information at time $t$, respectively. We define the (binary) freshness of node $i$ at time $t$ as 
\begin{align}
    F_i(t)=\mathbb{I}_{\{\nu_i(t)=\nu_s(t)\}},
\end{align}
where $\mathbb{I}_{\{A\}}$ is the indicator function of event $A$. Our goal is to characterize the long-term average freshness of node $i$

\begin{align}
  \overline{F}_i \;\triangleq\;\limsup_{T\to\infty}\frac{1}{T}\mathbb{E}\left[\int_{0}^{T}F_i(t)\,dt\right].
\end{align}
for a given network topology.

We focus on two main network topologies in this work: a) disconnected (DC) network, and b) fully-connected (FC) network. For the DC network, nodes do not exchange updates among themselves; the source/CH sends an update, and once received, the information remains unchanged at the node until another update arrives. On the other hand, in the FC network all nodes gossip with each other. This means that if one node receives an update from the source/CH, it can propagate this update to other nodes. The latter occurs according to a Poisson process of (total) rate $\lambda_g$. In this work, CHs form a DC network, i.e., they can only get updates from the source. While end-nodes in each cluster can either form a DC or an FC network. 

We now illustrate our rate-changing gossip (RC-Gossip) strategy. In the RC-Gossip strategy, the source-to-CH, CH-to-node or node-to-node update rates change based on the number of stale nodes. For example, consider a non-hierarchical DC network. Initially, the average update rate from the source to a given node $i$ is given by $\lambda_s/n$ (by network symmetry). Now given that node $i$ is updated and is now fresh, the source excludes it from future updates and focuses on updating the rest of the nodes, and hence it effectively {\it increases its update rate} to $\lambda_s/(n-1)$. In general, when $k$ nodes are fresh, the source update rate to any of the remaining $n-k$ stale nodes is $\lambda_s/(n-k)$. Once the source updates itself, all nodes become stale and the update rates are reset to $\lambda_s/n$.

Now consider a non-hierarchical FC network with $k$ fresh nodes. In this case, under RC-Gossip any of the remaining $n-k$ nodes gets updates from the source with a rate of $\lambda_s/(n-k)$, and receives gossip updates from any of the $k$ fresh nodes with a rate of $\lambda_g/(n-k)$ as well. To put that into perspective, in traditional gossip we would have had these rate be fixed at $\lambda_s/n$ and $\lambda_g/(n-1)$, respectively, regardless of the number of fresh nodes. This means a lot of the connections in traditional gossip are sending out redundant information which can be optimized by allocating update rates to the more stale nodes in order to maximize the freshness of the network.

We note that freshness achieved in RC-Gossip serves as a theoretical upper bound on freshness achieved among networks which split the total gossip rate among the set of nodes. Fresh nodes in the network are propagating information targeting stale nodes, which is optimal. 

An illustrative depiction of RC-Gossip is shown in Fig.~\ref{fig:RC Gossip}.


\section{Renewal-Reward Analysis of Freshness in Gossip Networks}

\begin{figure}[t]
    \centering
    \includegraphics[width=.635\columnwidth]{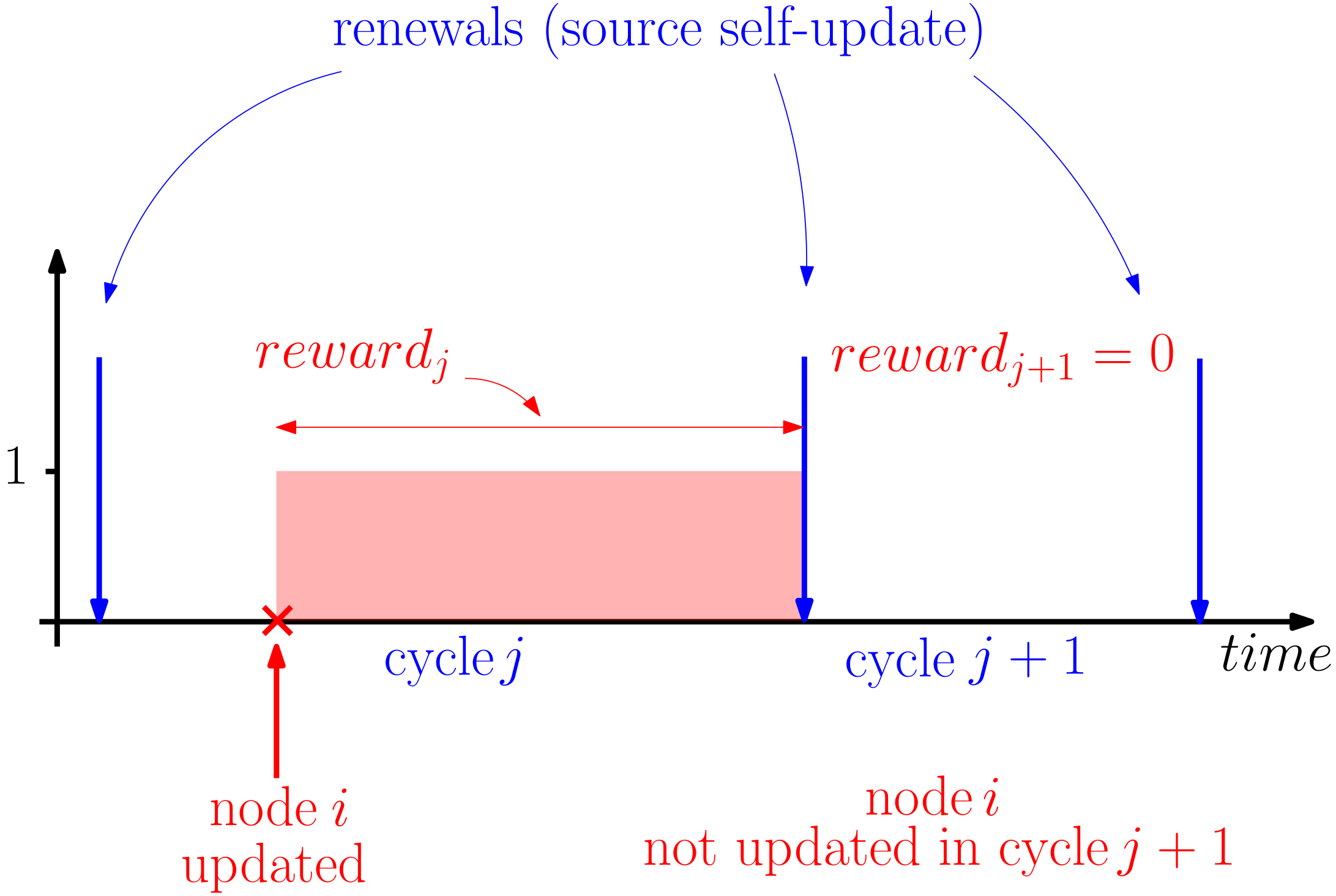}
    \caption{In cycle $j$ node $i$ gets updated; the reward (area) is given by $reward_j$. In cycle $j+1$, node $i$ does not get updated and thus $reward_{j+1} = 0$.}
    \label{fig:RR}
    \vspace{-.25in}
\end{figure}

In this section, we discuss the main tool we use to analyze RC-Gossip. Such tool, however, could be applied in general to analyze freshness for traditional gossip networks as well. The idea is to use the fact that, under binary freshness, the network status rests (renews) whenever the source updates itself. We denote by a {\it cycle} the time elapsed in between two consecutive source self-updates. Now observe that cycles are independent and identically distributed (i.i.d.) $\sim\exp(\lambda_e)$, and note that node $i$ has an average freshness given by the area under the binary freshness curve. During cycle $j$, such area is non-zero if and only if node $i$ receives an update, and is given by the duration of time during which it stays fresh, i.e., the time until the end of cycle $j$. Using the memoryless property of the exponential distribution, such time remaining is also $\sim\exp(\lambda_e)$. Since the network resets after each cycle, the probability of node $i$ receiving an update is the same for all cycles, which we denote by $p$. 

Finally, using the renewal-reward theorem, one can write the following:

\begin{align} \label{eq:RR}
    \overline{F}_i=\frac{(1-p)\cdot0+p\cdot1/\lambda_e}{1/\lambda_e}=p.
\end{align}

Thus, the expected long-term average freshness of a node in a gossip network is the same as the probability that the node gets updated before the source self-updates. The problem of calculating freshness of a node $i$ reduces to determining the probability $p$ that node $i$ is updated within a renewal cycle. The RR approach concepts are illustrated in Fig. \ref{fig:RR}.

In the next section, we use the aforementioned renewal-reward (RR)-based approach to analyze RC-Gossip in non-hierarchical settings, and then use the results as a building block to analyze RC-Gossip in clustered networks.
\section{Non-Hierarchical Networks} \label{sec:non-clustered}

We first show that the RR-based approach yields the same results as reported in \cite{BFM} using SHS.

\begin{lemma} \label{lemma:dc-norc}
    For a DC network with traditional gossip (denoted $DC_{\mathrm{noRC}}$), we have
    \begin{align} \label{eq_dcNoRC}
        \overline{F}_i^{DC_{\mathrm{noRC}}}=\frac{\lambda_s}{\lambda_s+n\lambda_e}.
    \end{align}
\end{lemma}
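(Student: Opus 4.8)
The plan is to leverage the renewal-reward reduction established in~\eqref{eq:RR}, which shows $\overline{F}_i = p$, the probability that node $i$ gets updated at least once before the source performs its next self-update within a renewal cycle. Thus the entire task collapses to identifying $p$ for the disconnected, traditional-gossip setting, and no integration of a freshness curve is needed.

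First I would set up the relevant competing Poisson clocks. In a non-hierarchical DC network under traditional gossip, the source distributes its total update rate $\lambda_s$ evenly across the $n$ end-nodes by symmetry, so node $i$ receives source updates according to a Poisson process of rate $\lambda_s/n$. A renewal cycle begins with a source self-update, at which instant every node is rendered stale, and ends at the following self-update, with the cycle length distributed as $\exp(\lambda_e)$. Because nodes in the DC network do not exchange updates and retain their version until a new one arrives, node $i$ becomes fresh and stays fresh for the remainder of the cycle as soon as it receives its first source update. Hence the event ``node $i$ is fresh during the cycle'' coincides exactly with the event ``the first source-to-$i$ update arrives before the source self-update.''

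Next I would evaluate $p$ as a race between two independent exponential clocks, both of which effectively restart at the beginning of the cycle by the memoryless property: the source-to-$i$ clock of rate $\lambda_s/n$ and the self-update clock of rate $\lambda_e$. The standard competing-exponentials identity gives the probability that the former fires first as $\frac{\lambda_s/n}{\lambda_s/n+\lambda_e}$, and multiplying numerator and denominator by $n$ yields $\frac{\lambda_s}{\lambda_s+n\lambda_e}$, which is precisely the claimed value of $\overline{F}_i^{DC_{\mathrm{noRC}}}=p$.

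The crux here is conceptual rather than computational: one must recognize that the DC structure collapses the freshness event of the whole cycle into a single first-arrival race, so that the elementary competing-exponentials formula applies directly; the only supporting detail is the symmetry argument assigning per-node rate $\lambda_s/n$. Once these observations are in place, the result follows by routine algebra, and I expect the symmetry/first-arrival reduction to be the only step warranting explicit justification.
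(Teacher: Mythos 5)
Your proposal is correct and follows essentially the same route as the paper: both invoke the renewal-reward identity $\overline{F}_i = p$ and compute $p$ as the probability that an $\exp(\lambda_s/n)$ arrival (the fixed per-node source rate under traditional gossip) beats an independent $\exp(\lambda_e)$ self-update, yielding $\frac{\lambda_s}{\lambda_s+n\lambda_e}$. Your write-up simply makes explicit the symmetry and first-arrival observations that the paper's two-sentence proof leaves implicit.
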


\begin{proof}
    In the $DC_{\mathrm{noRC}}$ setting, the source sends updates to each node at a fixed rate $\lambda_s/n$ throughout the entire cycle, independent of whether other nodes have been updated. Hence, the value of $p$ in \eqref{eq:RR} is simply given by the probability that an $\exp(\lambda_s/n)$ random variable is smaller than another, independent, $\exp(\lambda_e)$ random variable.
\end{proof}
\vspace{-0.03in}
Next, we present the first result for RC-Gossip.

\begin{lemma} \label{lemma:dc-src}
    For a DC network with RC-Gossip (denoted $DC_{\mathrm{RC}}$), we have
    \begin{align} \label{eq_dcRC}
        \overline{F}_i^{DC_{\mathrm{RC}}} = \frac{\lambda_s}{n \lambda_e} \left[1 - \left( \frac{\lambda_s}{\lambda_s + \lambda_e} \right)^n \right].
        \end{align}
\end{lemma}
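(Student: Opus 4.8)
The plan is to invoke the renewal-reward reduction \eqref{eq:RR}, so that it suffices to compute $p$, the probability that node $i$ receives a source update before the source self-updates (i.e., within a cycle). Under RC-Gossip in the DC setting, the crucial structural fact is that whenever at least one node is stale, the total source-update rate is held at $\lambda_s$ (since $j$ stale nodes each receive rate $\lambda_s/j$), while the cycle terminates at rate $\lambda_e$. Consequently, at every update opportunity the competition between ``another update occurs'' and ``the cycle ends'' is governed by the same pair of exponentials, independent of how many nodes are currently fresh.

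First I would decouple the timing from the node-selection. Let $a \triangleq \lambda_s/(\lambda_s+\lambda_e)$. Racing the update clock against the cycle-end clock, the number $N$ of source updates that occur within a cycle satisfies $\Pr(N \ge \ell) = a^\ell$ for $1 \le \ell \le n$, since producing $\ell$ updates requires $\ell$ consecutive ``update-wins,'' each of probability $a$. Independently, because each update targets a uniformly chosen stale node, the sequence in which nodes would be selected by successive updates is a uniformly random permutation of the $n$ nodes; hence node $i$'s rank in this sequence is uniform on $\{1,\dots,n\}$ and independent of $N$.

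Then I would observe that node $i$ is updated within the cycle if and only if $N$ is at least its rank. Conditioning on the rank $\ell$ and summing,
\begin{align}
p=\frac{1}{n}\sum_{\ell=1}^{n}\Pr(N\ge \ell)=\frac{1}{n}\sum_{\ell=1}^{n}a^{\ell}=\frac{1}{n}\cdot\frac{a\left(1-a^{n}\right)}{1-a}.
\end{align}
Substituting $a/(1-a)=\lambda_s/\lambda_e$ and $a=\lambda_s/(\lambda_s+\lambda_e)$, and setting $\overline{F}_i=p$, yields \eqref{eq_dcRC}.

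The main obstacle is rigorously justifying the independence between $N$ and node $i$'s rank, together with the claim that the per-step ``update-wins'' probability remains exactly $a$ regardless of the number of stale nodes; both hinge on the RC-Gossip rate allocation keeping the aggregate update rate constant at $\lambda_s$. As a cross-check, I would alternatively set up, with $b \triangleq 1-a$, the recursion $q_j=b+a\,\frac{j-1}{j}\,q_{j-1}$ for the probability $q_j$ that node $i$ is never updated when $j$ nodes (including $i$) are stale, linearize it through $r_j\triangleq j\,q_j$ into $r_j=jb+a\,r_{j-1}$, solve it explicitly, and confirm that $p=1-q_n$ matches the same expression.
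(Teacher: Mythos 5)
Your proposal is correct, and it shares the paper's overall skeleton: reduce to computing $p$ via the renewal-reward identity \eqref{eq:RR}, then analyze the successive races between the aggregate update rate (held constant at $\lambda_s$ under RC-Gossip as long as some node is stale) and the cycle-end rate $\lambda_e$. The difference lies in how the per-position probability is evaluated. The paper defines $U_{i,m}$, the event that node $i$ is the $m$th node updated in the cycle, and computes $\mathbb{P}(U_{i,m})$ as a telescoping product of step-wise conditional probabilities, which collapses to $\frac{1}{n}a^m$ with $a=\lambda_s/(\lambda_s+\lambda_e)$; summing over $m$ gives the geometric series. You instead write $U_{i,m}=\{\text{rank of }i=m\}\cap\{N\ge m\}$ and factorize by independence: the selection order is a uniform permutation (so node $i$'s rank is uniform on $\{1,\dots,n\}$) and is independent of the update count $N$ (whose tail is $a^m$), giving the same $\frac{1}{n}a^m$ per term. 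The two are equivalent --- your independence observation is precisely why the paper's product telescopes --- but your packaging cleanly separates the symmetry factor $1/n$ from the timing factor $a^m$, at the cost of the one step you rightly flag as needing rigor: that in a race among $j$ stale-node clocks of rate $\lambda_s/j$ each and the $\lambda_e$ clock, the identity of the winning node (uniform among stale nodes, given that an update wins) is independent of the win/lose outcome, independently across steps. That is a standard property of competing exponentials, so the gap is cosmetic rather than substantive. Your recursion cross-check is also consistent: $r_j=jq_j$ satisfies $r_j=jb+a\,r_{j-1}$ with $r_1=b$, whose solution $r_j=j-\frac{a(1-a^j)}{1-a}$ yields $p=1-q_n=\frac{a(1-a^n)}{n(1-a)}=\frac{\lambda_s}{n\lambda_e}\left[1-a^n\right]$, matching \eqref{eq_dcRC}.
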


\begin{proof}[Proof Sketch]
    In the $DC_{\mathrm{RC}}$ setting, when a node gets updated, the remaining stale nodes are assigned an increased update rate. Let $U_{i,m}$ denote the event that node $i$ is the $m$th node to be updated within a given cycle. Hence, we have
    \begin{align}
        \mathbb{P}(U_{i,1})=\frac{\lambda_s/n}{\lambda_s+\lambda_e}.
    \end{align}
    Next, we show that
    \begin{align}
        \mathbb{P}&(U_{i,m}) \nonumber \\
        &=\frac{(n-1)\lambda_s/n}{\lambda_s+\lambda_e}\cdot\frac{(n-2)\lambda_s/(n-1)}{\lambda_s+\lambda_e}\dots\frac{\lambda_s/(n-m+1)}{\lambda_s+\lambda_e}.
    \end{align}
    Finally, we note that in this case
    \begin{align}
        p=\sum_{m=1}^n\mathbb{P}(U_{i,m}).
    \end{align}
    Rearranging the above gives the result in the lemma.
\end{proof}

We now argue that the freshness in \eqref{eq_dcRC} is larger than that in \eqref{eq_dcNoRC} for $n\geq2$. Indeed, after some algebraic manipulations one can show that this is equivalent to showing that
\begin{align}\label{eq_dcRCgdcnoRC}
\left( 1 + \frac{\lambda_e}{\lambda_s} \right)^n > 1 + n \frac{\lambda_e}{\lambda_s},
\end{align}
which is readily given by the binomial expansion theorem.

We now turn our attention to FC networks. Observe that in this case RC-Gossip can vary: it can be applied at the source only (denoted $FC_{\mathrm{sRC}}$), or at the source and gossiping nodes as well (denoted $FC_{allRC}$). Clearly, we expect to have
\begin{align}
\overline{F}_i^{FC_{allRC}}\geq\overline{F}_i^{FC_{\mathrm{sRC}}}\geq\overline{F}_i^{FC_{\mathrm{noRC}}},
\end{align}
where $FC_{\mathrm{noRC}}$ denotes the traditional gossip FC network. Due to space limits, we only discuss the (best) $FC_{allRC}$ case.

\begin{lemma}
    For the $FC_{allRC}$ network, we have
    \begin{align}
        \overline{F}_i^{FC_{allRC}}=\frac{1}{n}\sum_{k=1}^{n} \prod_{j=1}^{k}\frac{\lambda_s+(j-1)\lambda_g}{\lambda_s+(j-1)\lambda_g+\lambda_e}.
        \label{eq:FC_allRC_ex}
    \end{align}
    
\end{lemma}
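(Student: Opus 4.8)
The plan is to reduce the claim to the probability $p$ in \eqref{eq:RR}, since the renewal-reward identity already established in the paper gives $\overline{F}_i^{FC_{allRC}}=p$. Hence it suffices to compute the probability that node $i$ receives an update before the source self-updates within a cycle. I would track the entire network through a single integer-valued quantity, $N(t)$, the number of fresh nodes at time $t$ within a cycle. At the start of each cycle all nodes are stale, so $N=0$, and during a cycle $N$ can only increase by one at a time, because a node that turns fresh stays fresh until the next source self-update resets everything. Thus $N(t)$ behaves as a pure-birth process competing against an absorbing event (the source self-update) of rate $\lambda_e$.

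The first key step is to identify the birth rates from the RC-Gossip rate allocations. In the $FC_{allRC}$ setting, when $k$ nodes are fresh each of the $n-k$ stale nodes receives source updates at rate $\lambda_s/(n-k)$ and gossip from each fresh node at rate $\lambda_g/(n-k)$, so its total incoming rate is $(\lambda_s+k\lambda_g)/(n-k)$. Aggregating over the $n-k$ stale nodes, the rate at which $N$ jumps from $k$ to $k+1$ is $\lambda_s+k\lambda_g$, independent of $n$. I expect this bookkeeping to be the main obstacle: one must verify that the source budget aggregates to exactly $\lambda_s$ and the gossip budgets aggregate to exactly $k\lambda_g$ (each of the $k$ fresh nodes contributing $\lambda_g$ spread across the stale nodes), so that the $n-k$ denominators cancel cleanly and leave an $n$-free rate.

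Next I would compute the law of $K$, the number of fresh nodes accumulated by the end of the cycle, i.e.\ the value of $N$ at the self-update instant. By memorylessness of the exponential clocks and independence of the successive competitions, the event $\{K\geq k\}$ requires the first $k$ births to each win against the self-update clock, yielding
\begin{align}
\mathbb{P}(K\geq k)=\prod_{j=0}^{k-1}\frac{\lambda_s+j\lambda_g}{\lambda_s+j\lambda_g+\lambda_e}=\prod_{j=1}^{k}\frac{\lambda_s+(j-1)\lambda_g}{\lambda_s+(j-1)\lambda_g+\lambda_e}.
\end{align}

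Finally, I would invoke symmetry: the dynamics are exchangeable across the $n$ nodes, so conditioned on $K=k$, node $i$ is among the fresh nodes with probability $k/n$. Summing over the $n$ nodes, $\sum_{i}\mathbb{P}(\text{node } i\text{ fresh at cycle end})=\mathbb{E}[K]$, and by symmetry each term equals $\mathbb{E}[K]/n$, giving $p=\mathbb{E}[K]/n$. Combining this with the tail-sum identity $\mathbb{E}[K]=\sum_{k=1}^{n}\mathbb{P}(K\geq k)$ and \eqref{eq:RR} produces the claimed expression \eqref{eq:FC_allRC_ex}.
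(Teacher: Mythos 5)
Your proof is correct, and it reaches \eqref{eq:FC_allRC_ex} by a genuinely different decomposition than the paper's. The paper argues node-centrically: it conditions on the position at which node $i$ itself is updated, writing $p=\sum_{k=1}^{n}\bigl[\prod_{j=1}^{k-1}\tau_j\bigr]q_k$, where $q_k$ is the probability that node $i$ is the $k$th node updated and $\tau_j$ is the probability that the $j$th update goes to a stale node other than $i$; the ratios $(n-j)/(n-j+1)$ buried in the $\tau_j$'s then telescope to yield the $1/n$ prefactor and the clean product. You argue population-centrically: the number of fresh nodes is a pure-birth process whose aggregate birth rate with $k$ fresh nodes is $\lambda_s+k\lambda_g$ (independent of $n$ --- the same rate bookkeeping that the paper encodes inside $q_k$ and $\tau_j$), so the terminal count $K$ has tail law $\mathbb{P}(K\geq k)=\prod_{j=1}^{k}\frac{\lambda_s+(j-1)\lambda_g}{\lambda_s+(j-1)\lambda_g+\lambda_e}$, and then exchangeability gives $p=\mathbb{E}[K]/n$ while the tail-sum identity $\mathbb{E}[K]=\sum_{k=1}^{n}\mathbb{P}(K\geq k)$ finishes the computation. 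Both proofs rest on the same two facts (the renewal-reward reduction $\overline{F}_i=p$ and the $n$-free aggregate rate $\lambda_s+k\lambda_g$), but your route buys a cleaner endgame: the $1/n$ emerges from symmetry rather than from telescoping algebra, the product appears directly as a tail probability rather than after ``algebraic manipulations,'' and you get the bonus interpretation $\overline{F}_i=\mathbb{E}[K]/n$, i.e., freshness equals the expected fraction of fresh nodes at the end of a cycle. What the paper's route buys is structural parallelism with the $DC_{\mathrm{RC}}$ proof of Lemma~\ref{lemma:dc-src} (the events $U_{i,m}$ there play the role of its $q_k,\tau_j$ here), and it never invokes exchangeability --- a step you should state explicitly (the RC-Gossip rates treat all stale nodes identically, so the law of the fresh set is exchangeable), though it is immediate in this model.
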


\begin{proof}[Proof Sketch]
    We first express the freshness of node $i$ as 
    \begin{align} \label{eq:FC_G}
        p=\sum_{k=1}^{n} \left[ \prod_{j=1}^{k-1} \tau_j \right] q_k,
    \end{align}
    where $q_k$ is the probability that node $i$ is the $k$th node to be updated after $k-1$ nodes have been updated already, and $\tau_j$ is the probability that the $j$th update goes to some other stale node other than node $i$ when $j-1$ nodes are fresh. We then show that
    \begin{align}
        q_k=&\frac{\lambda_s+(k-1)\lambda_g}{(n-k+1)\bigl[\lambda_s+(k-1)\lambda_g+\lambda_e\bigr]}, \\
        \tau_j=&\frac{(n-j)\bigl[\lambda_s+(j-1)\lambda_g\bigr]}{(n-j+1)\bigl[\lambda_s+(j-1)\lambda_g+\lambda_e\bigr]}.
    \end{align}
    Finally, the result is obtained upon substituting the above in \eqref{eq:FC_G} and performing some algebraic manipulations. 
\end{proof}
As a quick consistency check, we note that if $\lambda_g = 0$, the product in \eqref{eq:FC_allRC_ex} becomes $\bigl(\frac{\lambda_s}{\lambda_s + \lambda_e}\bigr)^k$, and the geometric sum gives the $DC_{\mathrm{RC}}$ result in Lemma~\ref{lemma:dc-src}.
\section{Clustered Networks} \label{sec:clustered}

We now use the previous section's results as a building block to analyze RC-Gossip in clustered networks. Let $p_{\mathrm{CH}}$ denote the probability that a CH is updated within the renewal cycle, and $p_{\mathrm{node}|\mathrm{CH}}$ denote the probability that an end-node is updated by its fresh CH within the same cycle. Thus, combining the two stages mentioned at the beginning of Section~\ref{sec:sys-mod}, we have
\vspace{-0.05in}
\begin{align}
    p=p_{\mathrm{CH}}\cdot p_{\mathrm{node}|\mathrm{CH}}.
\end{align}
Note that due to the memoryless property of the exponential distribution, once the CH is updated, the time remaining for the source self-update is still $\sim\exp(\lambda_e)$. The above observation simplifies the analysis of clustered networks via decomposition.

Observe that there are two main networks with varying employment of RC-Gossip: source-CH and CH-nodes. The source-CH network is either $DC_{\mathrm{noRC}}$ or $DC_{\mathrm{RC}}$, while the CH-nodes network could either be in one of the two DC network configurations, or be in the $FC_{\mathrm{noRC}}$, $FC_{\mathrm{sRC}}$ or $FC_{allRC}$ configuration. We present the final expressions for average end-node freshness under different network configurations in Table~\ref{table:Clustered-DC} for clustered DC networks, and in Table~\ref{table:Clustered-FC} for clustered FC networks. Since $FC_{\mathrm{sRC}}$  has a marginally higher freshness than $FC_{\mathrm{noRC}}$ especially for large $n$ (cf. Section~\ref{sec:num}), we consider $FC_{\mathrm{allRC}}$ to show the effectiveness of RC-Gossip in clustered FC networks. To highlight the increase in freshness due to RC-Gossip, we consider a case without RC-Gossip whether at source-CH or CH-nodes ($DC_{\mathrm{noRC}}, FC_{\mathrm{noRC}}$), a case with RC-Gossip only at source-CH ($DC_{\mathrm{RC}}, FC_{\mathrm{noRC}}$), and another case with RC-Gossip at both source-CH and CH-nodes ($DC_{\mathrm{RC}}, FC_{\mathrm{allRC}}$). We omit the proofs of these expressions due to space limits.

\vspace{-0.06in}

\begin{table}[!t]
\centering
\renewcommand{\arraystretch}{3.2} 
\caption{Average End-Node Freshness in Clustered DC Networks}
\label{table:disconnected}
\begin{tabular}{|m{3cm}|m{5cm}|}
\hline
\textbf{Network Config.} & \textbf{Average End-Node Freshness } \\ \hline
\((DC_{\mathrm{noRC}},DC_{\mathrm{noRC}})\) &
\raisebox{0.5\depth}{%
  \(\displaystyle
    \frac{\lambda_s}{\lambda_s + m\,\lambda_e} \times \frac{\lambda_c}{\lambda_c + k\,\lambda_e}
  \)
}
\\ \hline
\((DC_{\mathrm{noRC}},DC_{\mathrm{RC}})\) &
\raisebox{0.5\depth}{%
  \(\displaystyle 
    \frac{\lambda_s}{\lambda_s + m\,\lambda_e} \times 
    \frac{\lambda_c}{k\,\lambda_e}
    \left[1 - \left(\frac{\lambda_c}{\lambda_c + \lambda_e}\right)^k\right]
  \)
}
\\ \hline
\((DC_{\mathrm{RC}},DC_{\mathrm{noRC}})\) &
\raisebox{0.5\depth}{%
  \(\displaystyle 
    \frac{1}{m}\,\frac{\lambda_s}{\lambda_e}
    \left[1 - \left(\frac{\lambda_s}{\lambda_s + \lambda_e}\right)^m\right]
    \times \frac{\lambda_c}{\lambda_c + k\,\lambda_e}
  \)
}
 \\ \hline
\((DC_{\mathrm{RC}},DC_{\mathrm{RC}})\) & 
\(\begin{array}{c}
\displaystyle 
\left[\frac{1}{m}\,\frac{\lambda_s}{\lambda_e}(1 - a_s^m)\right] \times 
\left[\frac{\lambda_c}{k\,\lambda_e}(1 - a_c^k)\right]
\\[-0.5ex]
\text{where } a_s = \frac{\lambda_s}{\lambda_s + \lambda_e},\quad 
a_c = \frac{\lambda_c}{\lambda_c + \lambda_e}
\end{array}\)

\\  \hline
\end{tabular}
\label{table:Clustered-DC}
\vspace{-.1in}
\end{table}

\begin{table*}[!t]
\centering
\renewcommand{\arraystretch}{3.11}
\caption{Average End-Node Freshness in Clustered FC Networks} 
\label{table:FC_clustered_final}
\begin{tabular}{|p{2.5cm}|>{\centering\arraybackslash}p{14.5cm}|}
\hline
\textbf{Network Config.} & \textbf{Average End-Node Freshness} \\
\hline

\renewcommand{\arraystretch}{1.5}
\((DC_{\mathrm{noRC}},FC_{\mathrm{noRC}})\) &
\(\displaystyle 
\frac{\lambda_s}{\lambda_s + m\,\lambda_e} \!\cdot\!
\sum_{r=1}^{k}
\left( \prod_{i=1}^{r-1}
\frac{(k-i)\left(\frac{\lambda_c}{k} + (i-1)\frac{\lambda_g}{k-1}\right)}
     {(k-i+1)\left(\frac{\lambda_c}{k} + (i-1)\frac{\lambda_g}{k-1}\right) + \lambda_e}
\right)
\!\cdot\! 
\frac{\frac{\lambda_c}{k} + (r-1)\frac{\lambda_g}{k-1}}
     {(k-r+1)\left(\frac{\lambda_c}{k} + (r-1)\frac{\lambda_g}{k-1}\right) + \lambda_e}
\)
\\
\hline

\((DC_{\mathrm{RC}},FC_{\mathrm{noRC}})\) &
\(\displaystyle
\frac{1}{m}\,\frac{\lambda_s}{\lambda_e}
\left[1 - \left(\frac{\lambda_s}{\lambda_s + \lambda_e}\right)^m\right]
\!\cdot\!
\sum_{r=1}^{k}
\left( \prod_{i=1}^{r-1}
\frac{(k - i)\left(\frac{\lambda_c}{k} + (i - 1)\frac{\lambda_g}{k - 1}\right)}
     {(k - i + 1)\left(\frac{\lambda_c}{k} + (i - 1)\frac{\lambda_g}{k - 1}\right) + \lambda_e}
\right)
\!\cdot\!
\frac{\frac{\lambda_c}{k} + (r - 1)\frac{\lambda_g}{k - 1}}
     {(k - r + 1)\left(\frac{\lambda_c}{k} + (r - 1)\frac{\lambda_g}{k - 1}\right) + \lambda_e}
\)

\\ \hline

\((DC_{\mathrm{RC}},FC_{\mathrm{allRC}})\) &
\(\begin{array}{c}
\displaystyle
\frac{1}{m}\,\frac{\lambda_s}{\lambda_e}
\left[1 - \left(\frac{\lambda_s}{\lambda_s + \lambda_e}\right)^m\right]

\displaystyle
\!\cdot\!
\frac{1}{k}
\sum_{r=1}^{k}
\prod_{i=1}^{r}
\frac{\lambda_c + (i - 1)\lambda_g}
     {\lambda_c + (i - 1)\lambda_g + \lambda_e}
\end{array}\)

\\ \hline

\end{tabular}
\label{table:Clustered-FC}
\vspace{-.1in}
\end{table*}

\section{Numerical Results and Discussion} \label{sec:num}
In this section, we show some numerical examples to further illustrate the theoretical results of this paper. We start by the non-hierarchical setting in Section~\ref{sec:non-clustered}. In Fig.~\ref{fig:comparison}, we plot the long-term average freshness versus the total number of nodes $n$ for the five different networks considered. A crucial metric in performance comparison is the ratio $\alpha=\lambda_e/\lambda_s$ (the source self-update rate compared to the source-to-node update rate): lower (resp. higher) values $\alpha$ correspond to higher (resp. lower) freshness. For DC networks, we see that RC-gossip always enhances freshness for a given number of nodes. As for FC networks, we see that there is not much enhancement in freshness comparing $FC_{\mathrm{sRC}}$ and $FC_{\mathrm{noRC}}$, but once all nodes implement RC-gossip, we start to see a substantial increase in freshness for any given number of nodes. A common pattern for all network configurations is that for relatively high values of $\alpha$, freshness drops down to zero. This is due to the fact that no matter what scheme of gossiping we use, the long-term average freshness for a node in the network will deteriorate due the source's higher self update rate. However, freshness with RC-gossip will deteriorate relatively slower.  

\begin{figure}[!t]
    \centering
    
    \includegraphics[width=.65\columnwidth]{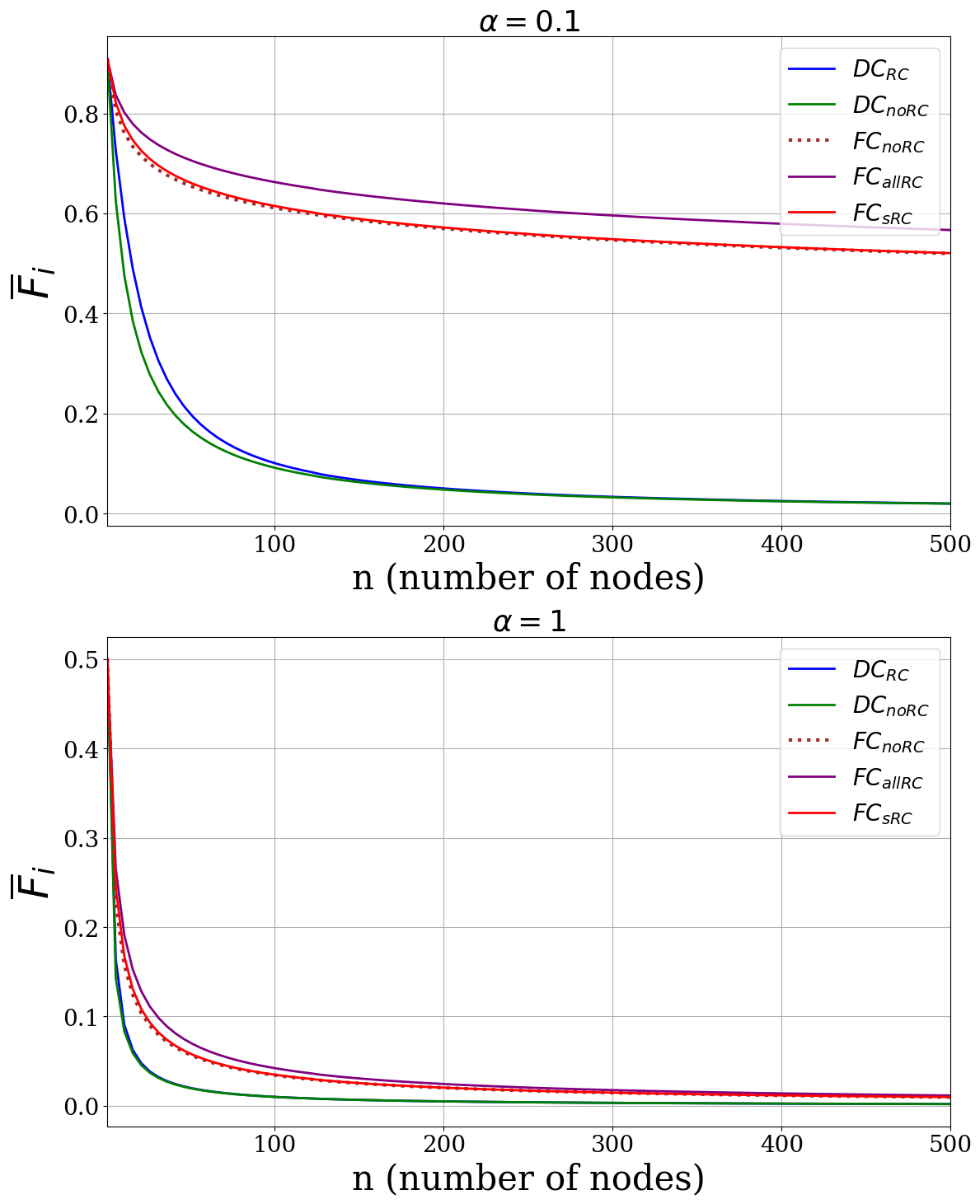}
    \caption{Comparison of $\overline{F}_i$ as a function of $n$ (number of nodes) for non-hierarchical networks under two different values of $\alpha = \lambda_e / \lambda_s \in \{0.1,\,1\}$. }
    \label{fig:comparison}
    \vspace{-.2in}
\end{figure}

Next, we move to clustered networks. In all cases here we fix $n=120$ nodes. Fig.~\ref{fig:Clustered DC} shows how the long-term average freshness for DC clustered networks vary with the cluster size $k$. We plot the four expressions shown in Table~\ref{table:Clustered-DC} under four different cases of update rates $(\lambda_e,\lambda_s,\lambda_c)$. In all cases, employing RC-gossip in both the source-CH and CH-nodes networks significantly improves the optimal average freshness achieved at the optimal cluster size. One notable observation is that employing RC-gossip only at the source-CH network or only at the CH-nodes network achieves the same optimal freshness, yet at different optimal cluster sizes, when $\lambda_s=\lambda_c$. However, when $\lambda_s>\lambda_c$ as in Case 2, RC-gossip favors being at the source-CH network, while the situation is reversed when $\lambda_s<\lambda_c$ as in Case 4. Basically, RC-gossip achieves higher gains when implemented at a source/CH with a relatively high update rate. 
Finally, in Fig.~\ref{fig:Clustered FC} we plot the expressions shown in Table~\ref{table:Clustered-FC} under varying cases of $(\lambda_e,\lambda_s,\lambda_c,\lambda_g)$. In all cases, as expected, $(DC_{\mathrm{RC}},FC_{\mathrm{allRC}})$ beats the other schemes significantly at the optimal cluster size.


\begin{figure}[!t]
    \centering
    \includegraphics[width=0.9\columnwidth]{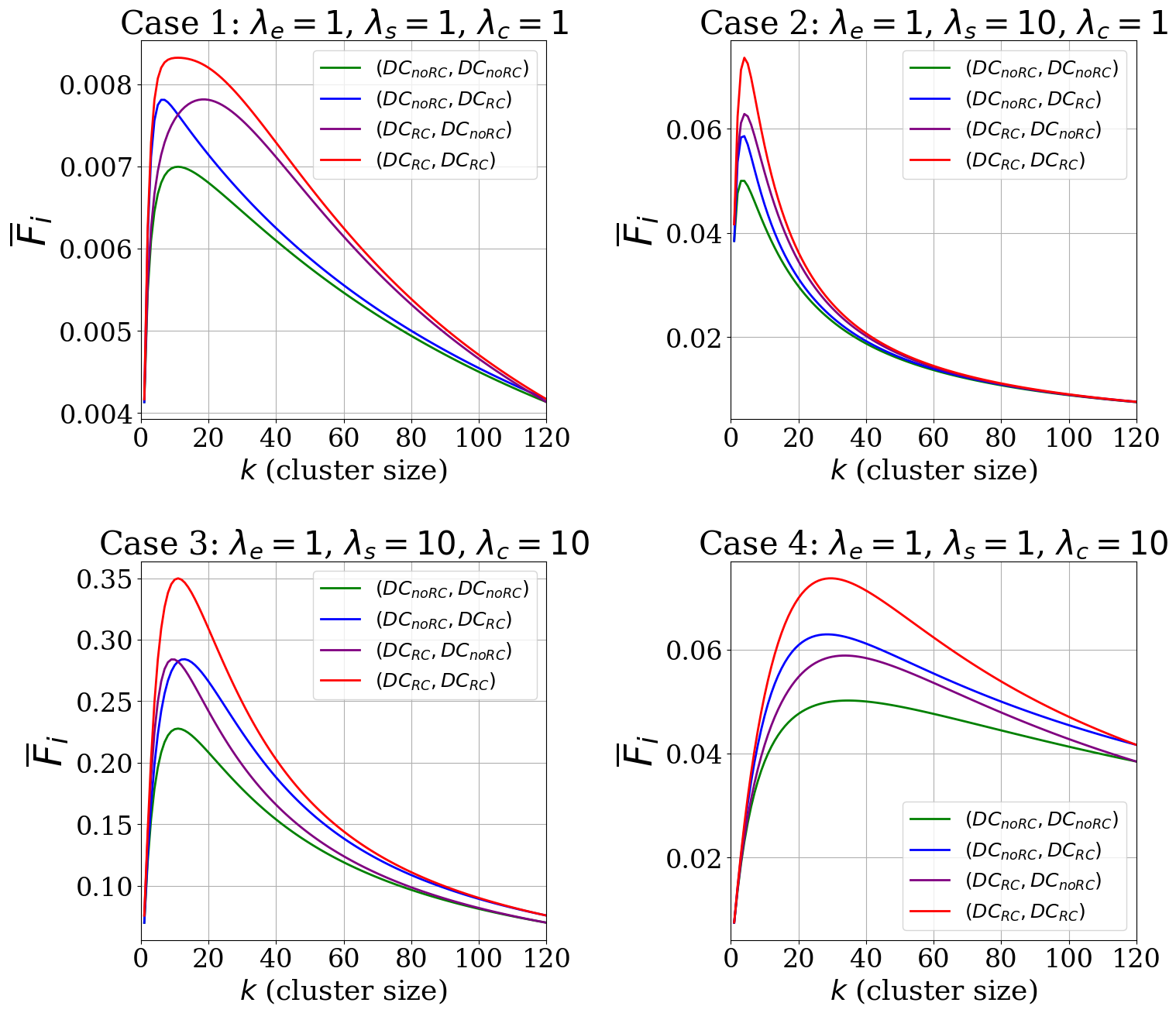}
    \caption{Comparison of $\overline{F}_i$ as a function of $k$ (cluster size) for DC clustered networks under different cases of $(\lambda_e,\lambda_s,\lambda_c)$.}
    \label{fig:Clustered DC}
    \vspace{-.2in}
\end{figure}

\begin{figure}[!t]
    \centering
    \includegraphics[width=0.9\columnwidth]{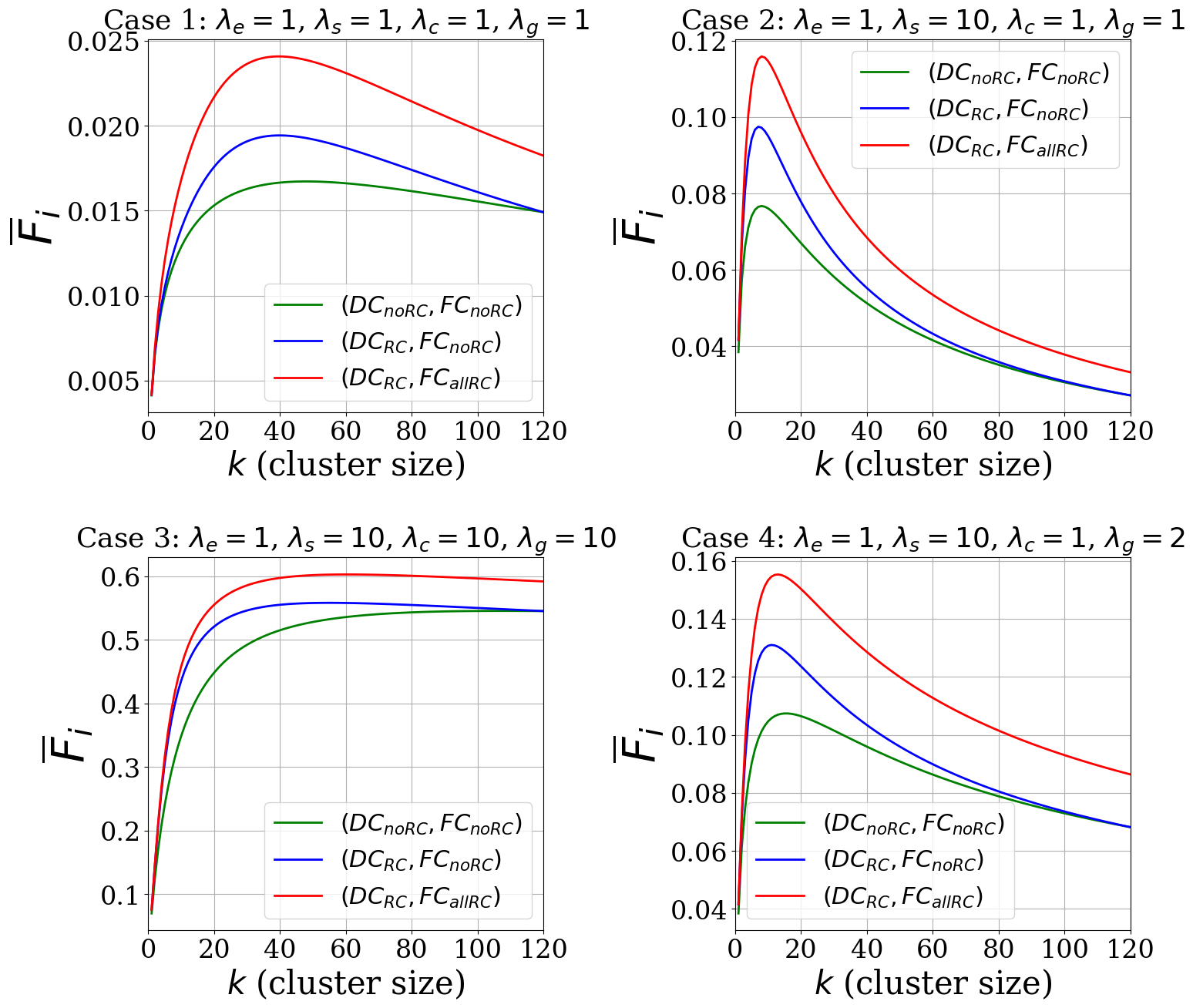}
    \caption{Comparison of $\overline{F}_i$ as a function of $k$ (cluster size) for FC clustered networks under different cases of $(\lambda_e,\lambda_s,\lambda_c,\lambda_g)$.}
    \label{fig:Clustered FC}
    \vspace{-.2in}
\end{figure} 

\section{Conclusion}
In this paper, we introduced and analyzed \textit{RC-Gossip} to improve information freshness in clustered networks. We analyzed freshness using a renewal-reward (RR)-based approach, providing a simpler alternative to the commonly employed SHS approach. Through theoretical analysis and numerical evaluations, we showed that RC-Gossip improves freshness across various gossip network topologies at optimal cluster sizes. Numerical studies further showed that optimal cluster sizes shift and freshness increases significantly with RC-Gossip compared to traditional gossiping schemes. Future work can explore extending to other network topologies and incorporating practical constraints such as limited energy resources or unreliable communication channels.



\begin{thebibliography}{99}

\bibitem{xie2024}
A. Mallik, D. Chen, K. Han, J. Xie and Z. Han. Unleashing the true power of age-of-information: Service aggregation in connected and autonomous vehicles. In \textit{Proc. IEEE ICC}, June 2024.

\bibitem{kadota2018}
I. Kadota, A. Sinha and E. Modiano. Optimizing age of information in wireless networks with throughput constraints. In \textit{Proc. IEEE INFOCOM}, April, 2018.

\bibitem{choi2021}
R. Han, J. Wang, L. Bai, J. Liu and J. Choi. Age of Information and Performance Analysis for UAV-Aided IoT Systems. \textit{IEEE Internet Things J.}, 8(19):14447-14457, October, 2021.

\bibitem{kaul2012realtime}
S. Kaul, R. Yates, and M. Gruteser. Real-time status: How often should one update? In \textit{Proc. IEEE INFOCOM}, March 2012.

\bibitem{yates2021age}
R. D. Yates, Y. Sun, D. R. Brown III, S. K. Kaul, E. Modiano, and S. Ulukus. Age of information: An introduction and survey. \textit{IEEE
J. Sel. Areas Commun.}, 39(5):1183–1210, May 2021

\bibitem{yates2021gossip}
R. D. Yates. The age of gossip in networks. In \textit{Proc. IEEE ISIT}, July, 2021.

\bibitem{CommunityGossip}
B. Buyukates, M. Bastopcu, and S. Ulukus. Version age of information in clustered gossip networks. \textit{IEEE J. Sel. Areas Inf. Theory}, 3(1): 85-97, March 2022.

\bibitem{Kaswan2022}
P. Kaswan and S. Ulukus. Age of gossip in ring networks in the presence of jamming attacks. In \textit{Proc. Asilomar}, October 2022. 

\bibitem{Mitra2022}
P. Mitra and S. Ulukus. ASUMAN: Age sense updating multiple access in networks. In \textit{Proc. Allerton}, September 2022.

\bibitem{Kaswan2023}
P. Kaswan and S. Ulukus. Information mutation and spread of misinformation in timely gossip networks. In \textit{Proc. IEEE GLOBECOM}, December 2023. 

\bibitem{Srivastava2023}
A. Srivastava and S. Ulukus. Age of gossip on generalized rings. In \textit{Proc. IEEE MILCOM}, October 2023.

\bibitem{Mitra2024}
P. Mitra and S. Ulukus. Age-aware gossiping in network topologies. \textit{IEEE Trans. Commun.}, 72(11): 7032-7046, November 2024.

\bibitem{Srivastava2025}
A. Srivastava, T. J. Maranzatto, and S. Ulukus. Age of gossip with the push-pull protocol. In \textit{Proc. IEEE ICASSP}, April 2025.

\bibitem{Irtiza2025}
I. Hasan and A. Arafa. Version Age of Information with Contact Mobility in Gossip Networks. In \textit{Proc. Allerton}, September 2025.

\bibitem{Kaswan2025}
{P. Kaswan, P. Mitra, A. Srivastava, and S. Ulukus. Age of information in gossip networks: A friendly introduction and literature survey. \textit{IEEE Trans. Commun.}, 73(8):6200-6220, August 2025.}

\bibitem{BFM}
M. Bastopcu, B. Buyukates, and S. Ulukus. Gossiping with binary freshness metric. In \textit{Proc. IEEE Globecom Wkshps}, December 2021.

\bibitem{Bastopcu2022}
M. Bastopcu, S. R. Etesami, and T. Bașar. The dissemination of time-varying information over networked agents with gossiping. In \textit{Proc. IEEE ISIT}, June 2022.


\end{thebibliography}
\end{document}